\newcommand{\s}{\mathfrak{S}}
\def\maj{{\scriptstyle \mathsf{MAJ}}}
\def\des{{\small \mathsf{des}}}
\def\ides{{\small \mathsf{ides}}}
\def\red{{\small \mathsf{red}}}
\def\mad{{\scriptstyle \mathsf{MAD}}}
\def\mak{{\scriptstyle \mathsf{MAK}}}
\def\den{{\scriptstyle \mathsf{DEN}}}
\def\inv{{\scriptstyle \mathsf{INV}}}
\def\stat{{\scriptstyle \mathsf{STAT}}}
\def\st{{\scriptstyle \mathsf{ST}}}
\def\stp{{\scriptstyle \mathsf{ST'}}}
\def\adj{{\small \mathsf{adj}}}
\def\Fi{{\scriptstyle \mathsf{F}}}
\newcommand{\vinc}[3]{
\begin{tikzpicture}[baseline = (X.base)]
	\useasboundingbox (0.1,0) rectangle (#1*0.23,0.1);
	\foreach \x/\y in {#2}
	{
		\draw (\x*0.2,0) node (X) {$\y$};
	}
	
	\foreach \z in {#3}
	{
		\ifnum 0<\z
			\ifnum \z<#1
				\draw[thick] (\z*0.2-0.07,-0.19) -- (\z*0.2+0.27,-0.19);
			\fi
		\fi
		
		\ifnum 0=\z
			\draw[thick] (0.07,0.1) -- (0.07,-0.19) -- (0.21,-0.19);
		\fi
		
		\ifnum \z=#1
			\draw[thick] (\z*0.2+0.14,0.1) -- (\z*0.2+0.14,-0.19) -- (\z*0.2,-0.19);
		\fi
	}
\end{tikzpicture}
}
\def\ppart{{ \mathsf{ppart}}}
\def\wpart{{ \mathsf{wpart}}}
\def\Exp{{ \mathsf{exp}}}
\def\flat{{ \mathsf{flat}}}
\newtheorem{Le}{Lemma}
\newtheorem{Co}{Corollary}
\newtheorem{The}{Theorem}
\theoremstyle{definition}
\newtheorem{Fact}{Fact}
\newtheorem{Exam}{Example}
\author{Sergey {\sc Kitaev}\\
{\small Department of Computer and Information Sciences, University of Strathclyde}\\
{\small Livingstone Tower, 26 Richmond Street, Glasgow G1 1XH, UK}\\
{\small \tt sergey.kitaev@cis.strath.ac.uk}\\
\\
Vincent {\sc Vajnovszki}\\ 
{\small LE2I, Universit\'e de Bourgogne}\\
{\small BP 47870, 21078 Dijon Cedex, France}\\
{\small \tt vvajnov@u-bourgogne.fr}
}
\title{Mahonian $\stat$ on words}
\begin{document}
\maketitle

\begin{abstract}

In 2000, Babson and Steingr{\'\i}msson introduced the notion of what is now known as a
permutation vincular pattern, and based on it they re-defined known Mahonian statistics 
and introduced new ones,  proving or conjecturing their Mahonity.  These conjectures were 
proved by Foata and Zeilberger in 2001, and by
Foata and Randrianarivony in 2006.

In 2010, Burstein refined some of these results by 
giving a bijection between 
permutations with a fixed value for the major index and those with the same value for 
$\stat$, where $\stat$ is 
one of the statistics defined and proved to be Mahonian in the 2000 
Babson and Steingr{\'\i}msson's paper. Several other statistics are preserved as well 
by Burstein's bijection.

At the Formal Power Series and Algebraic Combinatorics Conference 
(FPSAC) in 2010, Burstein asked whether his bijection has other interesting properties. In this paper, we not only show that 
Burstein's bijection preserves the Eulerian statistic $\ides$, but also use this fact, along with the bijection itself, to prove Mahonity of the statistic $\stat$ on words we introduce in this paper.  The words statistic $\stat$ introduced by us here addresses a natural question on existence of a Mahonian words analogue of $\stat$ on permutations. 
While proving Mahonity of our $\stat$ on words, we prove a more general 
joint equidistribution result involving  two six-tuples of statistics on (dense) words, where Burstein's bijection plays an important role.  
\end{abstract}

\section{Introduction}
 
In \cite{BabSteim}, the notion of what is now known as a {\em vincular 
pattern}\footnote{Such patterns are called {\em generalized patterns} in \cite{BabSteim}.} on
permutations was introduced, and it was shown that almost all known {\em Mahonian} permutation statistics 
(that is, those statistics that are distributed as $\inv$ or as $\maj$ to be defined in Section~\ref{sec2})
can be expressed as combinations of vincular patterns. 
The authors of \cite{BabSteim} also introduced some new vincular pattern-based 
permutation statistics, 
showing that some of them are Mahonian and conjecturing that others  
are Mahonian as well.
These conjectures were proved later in 
\cite{Foata_Zeilberger,Foata_Randrianarivony},
and recently, alternative proofs based on {\em Lehmer code transforms} were
given in \cite{Vaj_13}.

Three statistics expressed in terms of vincular pattern
combinations in \cite{BabSteim} 
(namely, $\mak$, $\mad$ and $\den$) are known to be Mahonian not only on permutations, but 
also on words (see \cite[Theorem 5]{ClaSteimZeng}); 
more precisely, for any word $v$, the three statistics are 
distributed as $\inv$ on the set of rearrangements of the letters of $v$.

One of the statistics defined and shown to be Mahonian in \cite{BabSteim}
is $\stat$. Generalizing a result in \cite{Foata_Zeilberger},
Burstein \cite{Bur}  shown the equidistribution of $\stat$ and $\maj$
together with other statistics by means of an involution $p$ on the set of permutations. At the Formal Power Series and Algebraic Combinatorics Conference 
(FPSAC) in 2010, Burstein asked whether $p$ has other interesting properties. 

In this paper, we not only show that 
$p$ preserves the Eulerian statistic $\ides$ (which is not preserved, e.g. by the bijection $\Phi$ on words \cite{ClaSteimZeng} mapping $\mad$ to $\inv$), but also use this fact, along with $p$ itself, to prove Mahonity of the statistic $\stat$ on words introduced in Subsection~\ref{vinc-pat} (see relation (\ref{def_stat})).  The words statistic $\stat$ introduced by us in this paper addresses a natural question on existence of a Mahonian words analogue of $\stat$ on permutations. 
While proving Mahonity of our $\stat$ on words, we prove a more general 
joint equidistribution result involving  two six-tuples of statistics on (dense) words, where the bijection $p$ plays an important role (see Theorems~\ref{thm1} and~\ref{thm2} in Section~\ref{ext-of-p-to-words}).


\section{Preliminaries}\label{sec2}

We denote by $[n]$ the set $\{1,2,\ldots,n\}$, by 
$\s_n$ the set of permutations of $[n]$, and by
$[q]^n$ the set of length $n$ words 
over the alphabet $[q]$. Clearly $\s_n\subset [q]^n$ for $q\geq n>1$.
A word $v$ in $[q]^n$ is said to be 
{\it dense}
if each letter in $[q]$ occurs at least once in $v$. Dense words are also called {\em multi-permutations}.

\subsection{Statistics}

A {\em statistic} on $[q]^n$ (and thus on $\s_n$) is an association of an integer
to each word in $[q]^n$. Classical examples of statistics are:
\begin{itemize}
\item[]
$
\inv\, v =\text{card}\, \{(i,j)\ :\ 1\leq i<j\leq n, v_i>v_j\},
$
\item[]
$
\displaystyle 
\maj\, v = \mathop{\sum_{1\leq i <n}}_{v_i>v_{i+1}} i,
$
\item[]
$
\displaystyle 
\des\, v = \text{card}\, \{i\ :\ 1\leq i<n, v_i>v_{i+1}\},
$
\end{itemize}
where $v=v_1v_2\ldots v_n$ is a length $n$ word. 
For example, $\inv(31425)=3$, $\maj(3314452)=8$, and $\des(8416422)=4$.

For a word $v$ and a letter $a$ in $v$, 
other than the largest one in $v$,
let us denote by $next_v(a)$ the smallest letter in $v$ larger
than $a$. With this notation, we define

\begin{itemize}
\item[]
$
\displaystyle 
\ides\, v = \text{card}\, \{a :\ {\rm there\ are}\ i\ {\rm and}\ j,1\leq i<j\leq n,\ {\rm with}\ 
v_i=next_v(a)\ {\rm and}\ v_j=a \}.
$
\end{itemize}

\noindent
Clearly, when $v$ is a permutation, $\ides\, v$ is simply $\des\, v^{-1}$, where $v^{-1}$ is 
the inverse of $v$. For example, $\ides(144625)=2$, 
and the corresponding values for $a$ are 2 and 5.

For a set of words $S$, two statistics $\st$ and $\st'$ have the same distribution 
(or are equidistributed) on $S$
if, for any $k$, 
$$
\mathrm{card}\{v\in S: \st\,v=k\}=
\mathrm{card}\{v\in S: \stp\,v=k\},
$$
and it is well-known that $\inv$ and $\maj$ have the same distribution
on both, the set of permutations and that of words.

A {\it multi-statistic} is simply a tuple of statistics.


\subsection{Vincular patterns}\label{vinc-pat}

Let $1\leq r\leq q$ and $1\leq m\leq n$, and let $v\in [r]^m$ be a dense word.
One says that  $v$ occurs as a (classical) pattern in $w=w_1w_2\cdots w_n\in[q]^n$
if there is a sequence $1\leq i_1<i_2<\cdots<i_m\leq n$ such that
$w_{i_1}w_{i_2}\cdots w_{i_m}$ is order-isomorphic to
$v$. For example, $1231$ occurs as a pattern in $6214562$,
and the three occurrences of it are $2452$, $2462$ and $2562$.

Vincular patterns were introduced in the context of permutations in \cite{BabSteim}
and they were extensively studied since then (see 
Chapter 7 in \cite{Kit} for a comprehensive description of results on these patterns). 
Vincular patterns 
generalize classical patterns and they are defined as follows:
\begin{itemize}
\item Any pair of two adjacent letters may now be underlined, which means that the 
corresponding letters in the permutation must be 
adjacent\footnote{The original notation for vincular patterns uses dashes: 
the absence of a dash between two letters of a pattern means that these letters 
are adjacent in the permutation.}.  For example, the pattern 
$\vinc{3}{1/2,2/1,3/3}{2}$ occurs in the permutation 425163 four times, namely, as 
the subsequences $425$, $416$, $216$ and $516$. Note that, the subsequences 426 and 213 are {\em not} 
occurrences of the pattern because their last two letters are not adjacent in the permutation. 
\item If a pattern begins (resp., ends) with a 
hook\footnote{In the original notation the role of hooks was played by square brackets.} 
then its occurrence is required to begin (resp., end) with the leftmost (resp., rightmost) 
letter in the permutation. For example, there are two occurrences of the pattern 
$\vinc{3}{1/2,2/1,3/3}{0,2}$ in the permutation $425163$, which are the subsequences 
$425$ and $416$.
\end{itemize}

The notion of a vincular pattern is naturally extended to words. 
For example, in the word $6214562$, $645$ is an occurrence of the pattern $\vinc{3}{1/3,2/1,3/2}{2}$, and
$262$ is that of $\vinc{3}{1/1,2/2,3/1}{2,3}$ .

For a set of patterns $\{p_1,p_2,\ldots\}$ we denote by 
$(p_1+p_2+\ldots)$ the statistic giving the total number of occurrences of the patterns
in a permutation. It follows from definitions that

\begin{equation}
\maj\, v=(\vinc{3}{1/1,2/3,3/2}{2} +\vinc{3}{1/1,2/2,3/1}{2} +\vinc{3}{1/2,2/3,3/1}{2} +\vinc{3}{1/2,2/2,3/1}{2} 
         +\vinc{3}{1/3,2/2,3/1}{2} +\vinc{2}{1/2,2/1}{1})\,v.
\label{gen_maj}
\end{equation}

A vincular pattern of the form $\vinc{3}{1/u,2/v,3/x}{2}$,
with $\{u,v,x\}=\{1,2,3\}$, is determined by the relative order
of $u$, $v$ and $x$. For example, $\vinc{3}{1/2,2/1,3/3}{2}$ is determined by $v<u<x$, and 
$\vinc{3}{1/3,2/2,3/1}{2}$ by $x<v<u$.

An {\it extension} of a vincular pattern $\vinc{3}{1/u,2/v,3/x}{2}$, $\{u,v,x\}=\{1,2,3\}$,
is the combination of the vincular patterns obtained by replacing an order relation involving 
$u$ (possibly both of them if there are two) by
its (their) weak counterpart. For example, 

\begin{itemize}
\item the unique extension of $\vinc{3}{1/1,2/3,3/2}{2}$ is 
      $(\vinc{3}{1/1,2/3,3/2}{2}+\vinc{3}{1/1,2/2,3/1}{2})$; and 
\item the three extensions of $\vinc{3}{1/2,2/3,3/1}{2}$ are:
    \begin{itemize}
    \item[] $(\vinc{3}{1/2,2/3,3/1}{2}+\vinc{3}{1/1,2/2,3/1}{2})$,
    \item[] $(\vinc{3}{1/2,2/3,3/1}{2}+\vinc{3}{1/2,2/2,3/1}{2})$, and
    \item[] $(\vinc{3}{1/2,2/3,3/1}{2}+\vinc{3}{1/1,2/2,3/1}{2}+\vinc{3}{1/2,2/2,3/1}{2})$.
    \end{itemize}
\end{itemize}

An extension of a vincular pattern $\vinc{3}{1/u,2/v,3/x}{1}$ is defined similarly, and 
an extension of $(p_1+p_2+\ldots)$ is the statistic obtained by extending some 
of $p_i$'s.

With these notations, the definition of $\maj$ in (\ref{gen_maj}) is an extension of 
$\maj$ defined on $\s_n$:

\begin{equation}
\maj\, v=(\vinc{3}{1/1,2/3,3/2}{2} +\vinc{3}{1/2,2/3,3/1}{2} +\vinc{3}{1/3,2/2,3/1}{2} +\vinc{2}{1/2,2/1}{1})\,v.
\label{perm_maj}
\end{equation}

The statistic $\stat$ on permutations was introduced and shown to be Mahonian
in \cite{BabSteim}, i.e. distributed as $\maj$; $\stat$ is 
defined as:
$$
\stat\,\pi=(\vinc{3}{1/2,2/1,3/3}{1} +\vinc{3}{1/1,2/3,3/2}{1} +\vinc{3}{1/3,2/2,3/1}{1}  
+\vinc{2}{1/2,2/1}{1})\,\pi.
$$
An extension of $\stat$ to words, where repeated letters are allowed, is to extend:

\begin{itemize}
\item $\vinc{3}{1/2,2/1,3/3}{1}$ as  $(\vinc{3}{1/2,2/1,3/3}{1} + \vinc{3}{1/2,2/1,3/2}{1})$, and 
\item $\vinc{3}{1/1,2/3,3/2}{1}$ as  $(\vinc{3}{1/1,2/3,3/2}{1} + \vinc{3}{1/1,2/2,3/1}{1})$,
\end{itemize}
and thus to define the statistic $\stat$ on words as:
\begin{equation}
\stat\,v=( \vinc{3}{1/2,2/1,3/3}{1} + \vinc{3}{1/2,2/1,3/2}{1} +
           \vinc{3}{1/1,2/3,3/2}{1} + \vinc{3}{1/1,2/2,3/1}{1} +
	   \vinc{3}{1/3,2/2,3/1}{1} +
	   \vinc{2}{1/2,2/1}{1})\,v.
\label{def_stat}
\end{equation}
In what follows, we will use 
this definition which seems to be sporadic and not any better than any other
possible extension of $\stat$ from permutations to words.
However, a consequence of Theorem \ref{the_eq_w} is that this extension has the same distribution as 
$\maj$ on words, and experimental tests show that no other extension
(in the sense specified above) does so.

\section{The bijection $p$ on $\s_n$}\label{sec3}

Now we present the involution on $\s_n$ introduced in \cite{Bur} 
which maps a permutation with a given value for $\maj$ to one with the same 
value for $\stat$, and show that among other statistics, it preserves $\ides$.

For three integers $a\leq x\leq b$, the {\it complement} of $x$ with respect to 
the interval $\{a,a+1,\ldots,b\}$ is simply the integer $b-(x-a)$.

For a $\pi=\pi_1\pi_2\ldots\pi_n\in\s_n$, let us define
\begin{itemize}
\item $\pi'\in\s_n$ by $\pi'_1=\pi_1$, and for $i\geq 2$,
      $\pi'_i$ is the complement of $\pi_i$ with respect to 
      \begin{itemize}
      \item $\{\pi_1+1,\pi_1+2,\ldots,n\}$, if $\pi_i>\pi_1$, and 
      \item $\{1,2,\ldots,\pi_1-1\}$, if 
            $\pi_i<\pi_1$;
      \end{itemize}
\item  $\pi''\in\s_n$ by $\pi''_1=\pi'_1=\pi_1$ and $\pi''_i=\pi'_{n-i+2}$.
\end{itemize}

Clearly, the map $\pi\mapsto \pi''$ is a bijection on $\s_n$. In fact,  $p$ is an involution, that is 
$p(p(\pi))=\pi$. See Figure \ref{pis} for an example.

\begin{figure}
\begin{center}
\begin{tabular}{ccc}
\unitlength=3mm\begin{picture}(6,6)
\put(0.,0.){\line(1,0){6}}
\put(0.,1.){\line(1,0){6}}
\put(0.,2.){\line(1,0){6}}
\put(0.,3.){\line(1,0){6}}
\put(0.,4.){\line(1,0){6}}
\put(0.,5.){\line(1,0){6}}
\put(0.,6.){\line(1,0){6}}
\put(0.,0.){\line(0,1){6}}
\put(1.,0.){\line(0,1){6}}
\put(2.,0.){\line(0,1){6}}
\put(3.,0.){\line(0,1){6}}
\put(4.,0.){\line(0,1){6}}
\put(5.,0.){\line(0,1){6}}
\put(6.,0.){\line(0,1){6}}
\put(0.5,3.5){\circle*{0.3}} 
\put(1.5,4.5){\circle*{0.3}} 
\put(2.5,1.5){\circle*{0.3}} 
\put(3.5,5.5){\circle*{0.3}} 
\put(4.5,2.5){\circle*{0.3}} 
\put(5.5,0.5){\circle*{0.3}} 
\end{picture}
&
\unitlength=3mm\begin{picture}(6,6)
\put(0.,0.){\line(1,0){6}}
\put(0.,1.){\line(1,0){6}}
\put(0.,2.){\line(1,0){6}}
\put(0.,3.){\line(1,0){6}}
\put(0.,4.){\line(1,0){6}}
\put(0.,5.){\line(1,0){6}}
\put(0.,6.){\line(1,0){6}}
\put(0.,0.){\line(0,1){6}}
\put(1.,0.){\line(0,1){6}}
\put(2.,0.){\line(0,1){6}}
\put(3.,0.){\line(0,1){6}}
\put(4.,0.){\line(0,1){6}}
\put(5.,0.){\line(0,1){6}}
\put(6.,0.){\line(0,1){6}}
\put(0.5,3.5){\circle*{0.3}} 
\put(1.5,5.5){\circle*{0.3}} 
\put(2.5,1.5){\circle*{0.3}} 
\put(3.5,4.5){\circle*{0.3}} 
\put(4.5,0.5){\circle*{0.3}} 
\put(5.5,2.5){\circle*{0.3}} 
\end{picture}
&
\unitlength=3mm\begin{picture}(6,6)
\put(0.,0.){\line(1,0){6}}
\put(0.,1.){\line(1,0){6}}
\put(0.,2.){\line(1,0){6}}
\put(0.,3.){\line(1,0){6}}
\put(0.,4.){\line(1,0){6}}
\put(0.,5.){\line(1,0){6}}
\put(0.,6.){\line(1,0){6}}
\put(0.,0.){\line(0,1){6}}
\put(1.,0.){\line(0,1){6}}
\put(2.,0.){\line(0,1){6}}
\put(3.,0.){\line(0,1){6}}
\put(4.,0.){\line(0,1){6}}
\put(5.,0.){\line(0,1){6}}
\put(6.,0.){\line(0,1){6}}
\put(0.5,3.5){\circle*{0.3}} 
\put(1.5,2.5){\circle*{0.3}} 
\put(2.5,0.5){\circle*{0.3}} 
\put(3.5,4.5){\circle*{0.3}} 
\put(4.5,1.5){\circle*{0.3}} 
\put(5.5,5.5){\circle*{0.3}} 
\end{picture}
\\
(a) & (b) & (c)
\end{tabular}
\end{center}
\caption{\label{pis}
The permutations: (a) $\pi=452631$, (b) $\pi'=462513$, and (c) $\pi''=p(\pi)=431526$.}
\end{figure}
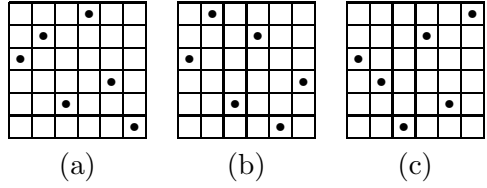

Also, in \cite{Bur}  is proved that, for any $\pi=\pi_1\pi_2\ldots\pi_n\in\s_n$, 
the $5$-tuple $(\adj,\des,\Fi,\maj,\stat)\,\pi$ is equal to 
$(\adj,\des,\Fi,\stat,\maj)\,p(\pi)$, where  

\begin{itemize}
\item $\adj\, \pi = \text{card}\, \{i\ :\ 1\leq i\leq n, \pi'_i=\pi'_{i+1}+1\}$, 
      where $\pi'=\pi0$, and
\item $\Fi\,\pi=\pi_1$.
\end{itemize}

Below, we will use the following result.

\begin{Le}\label{lemma1}
For any $\pi=\pi_1\pi_2\ldots\pi_n\in\s_n$, $\ides\, \pi=\ides\, p(\pi)$.
\label{le_th}
\end{Le}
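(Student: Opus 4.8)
The plan is to reformulate $\ides$ combinatorially and then follow each value through the two steps $\pi\mapsto\pi'\mapsto\pi''=p(\pi)$. Since $next_\pi(a)=a+1$ for a permutation, we have $\ides\,\pi=\mathrm{card}\{a\in[n-1]:\ a+1 \text{ occurs before } a \text{ in } \pi\}$. Writing $\mathrm{pos}_\sigma(v)$ for the position of the value $v$ in $\sigma$, let us say that $a$ \emph{counts} in $\sigma$ when $\mathrm{pos}_\sigma(a+1)<\mathrm{pos}_\sigma(a)$, so that $\ides\,\sigma$ is the number of counting values. I would prove the lemma by exhibiting, value by value, a contribution-preserving correspondence between the counting values of $\pi$ and those of $\pi''$.

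First I would record two structural facts, writing $c=\pi_1$ and letting $\bar v$ denote the complement of $v$ in its block, i.e.\ $\bar v=c+1+n-v$ if $v>c$ and $\bar v=c-v$ if $v<c$. \emph{(i)} The letter $c$ sits in position $1$ of each of $\pi$, $\pi'$ and $\pi''$; hence in all three permutations $c+1$ never precedes $c$ while $c$ always precedes $c-1$, so the two boundary pairs $(c,c+1)$ and $(c-1,c)$ contribute $0$ and $1$ respectively, independently of the permutation. \emph{(ii)} Every other consecutive pair $\{a,a+1\}$ lies entirely in the upper block $\{c+1,\dots,n\}$ or entirely in the lower block $\{1,\dots,c-1\}$, and on each block the complement map is an order-reversing involution sending $\{a,a+1\}$ to the consecutive pair $\{\overline{a+1},\bar a\}$ with $\bar a=\overline{a+1}+1$.

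Next I would translate the two steps into position maps. Complementation fixes positions and relabels values, giving $\mathrm{pos}_{\pi'}(v)=\mathrm{pos}_{\pi}(\bar v)$; the tail reversal gives $\mathrm{pos}_{\pi''}(v)=n+2-\mathrm{pos}_{\pi'}(v)$ for $v\neq c$. Composing, $\mathrm{pos}_{\pi''}(v)=n+2-\mathrm{pos}_{\pi}(\bar v)$. For a non-boundary pair $(a,a+1)$ this yields
\[
\mathrm{pos}_{\pi''}(a+1)<\mathrm{pos}_{\pi''}(a)\iff \mathrm{pos}_{\pi}(\bar a)<\mathrm{pos}_{\pi}(\overline{a+1})\iff \mathrm{pos}_{\pi}(m+1)<\mathrm{pos}_{\pi}(m),
\]
where $m=\overline{a+1}$ and $\bar a=m+1$ by fact \emph{(ii)}. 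Thus $a$ counts in $\pi''$ iff $m$ counts in $\pi$; crucially the order-reversal of the complement within a block is cancelled by the order-reversal of the tail inversion, so this is an equivalence, not a negation. Since $a\mapsto\overline{a+1}$ restricts to a bijection of the non-boundary indices of each block onto themselves, summing the equal contributions over all non-boundary pairs and adding the common boundary contribution from \emph{(i)} yields $\ides\,\pi''=\ides\,\pi$.

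The main obstacle is the bookkeeping forced by the relabeling of values under complementation: one must verify that consecutive values stay consecutive within a block (which is exactly why the complement, being an interval reversal, is the right map), that the induced index map $a\mapsto\overline{a+1}$ genuinely permutes the non-boundary pairs block by block, and that the degenerate cases $c=1$ and $c=n$ (one block empty and the corresponding boundary pair absent) behave correctly. The conceptual crux, and the point that makes the count come out, is the cancellation of the two order-reversals on non-boundary pairs, so that the only asymmetry left is the one carried by the fixed letter $\pi_1$ through its two boundary pairs.
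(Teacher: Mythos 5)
Your proof is correct and takes essentially the same approach as the paper's: the paper likewise treats the boundary value $\pi_1-1$ separately as an occurrence of $\ides$ in both $\pi$ and $p(\pi)$, and matches every other occurrence $a$ of $\ides$ in $\pi$ with the element in position $n-\pi^{-1}(a+1)+2$ of $p(\pi)$, which is exactly your complement $\overline{a+1}$. Your write-up simply makes explicit the position formulas, the block-by-block involution $a\mapsto\overline{a+1}$, and the degenerate cases that the paper's terser argument leaves implicit.
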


\proof
An integer $a$ is an occurrence of an $\ides$ in $\pi$ if there are $i<j$ such that 
$\pi_i=a+1$ and $\pi_j=a$.
Clearly, if $\pi_1>1$, then $\pi_1-1$ is an occurrence of an $\ides$ in both $\pi$ and 
$\sigma=p(\pi)$. And $a\neq\pi_1-1$ is an occurrence of an $\ides$ in $\pi$ 
if and only if so is the element in position $n-\pi^{-1}(a+1)+2$ in $\sigma$, where $\pi^{-1}(a+1)$ is the position of the element $a+1$ in $\pi$.
\endproof

The following lemma, to be used later, follows directly from the proof of Lemma~\ref{lemma1}.

\begin{Le}
The number of $\ides$ in the interval
$\{1,2,\ldots,\pi_1-1\}$ is the same for $\pi$ and $p(\pi)$.
\label{le_by}
\end{Le}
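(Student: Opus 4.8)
The plan is to extract the needed refinement directly from the bijection on occurrences of $\ides$ already constructed in the proof of Lemma~\ref{lemma1}. Recall from that proof that the occurrences of $\ides$ in $\pi$ split into two kinds: the special value $\pi_1-1$ (present precisely when $\pi_1>1$), and every other value $a\neq\pi_1-1$, each of which is matched bijectively with the element in position $n-\pi^{-1}(a+1)+2$ of $\sigma=p(\pi)$. I would first argue that the special value $\pi_1-1$, when it exists, lies in the interval $\{1,2,\ldots,\pi_1-1\}$ for \emph{both} $\pi$ and $p(\pi)$ (indeed $p(\pi)$ has the same first letter $\pi_1$, so it contributes $\pi_1-1$ as an $\ides$ occurrence on both sides and is harmless to the count comparison restricted to this interval).

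The heart of the matter is to show that the bijection $a\mapsto$ (the occurrence located at position $n-\pi^{-1}(a+1)+2$ in $\sigma$) is \emph{value-preserving}, i.e. that it sends an $\ides$ occurrence of value $a$ in $\pi$ to an $\ides$ occurrence of the same value $a$ in $p(\pi)$. If this is established, then the bijection restricts to a bijection between $\ides$ occurrences of $\pi$ with value in $\{1,\ldots,\pi_1-1\}$ and those of $p(\pi)$ with value in the same interval, which is exactly the assertion of Lemma~\ref{le_by}. So I would go back to the definitions of $\pi'$ and $\pi''$ and trace what the complementation does to the value of the matched occurrence: since $\pi'$ complements each entry within the block $\{1,\ldots,\pi_1-1\}$ (for entries below $\pi_1$) or within $\{\pi_1+1,\ldots,n\}$ (for entries above $\pi_1$), and $\pi''$ only reverses positions $2$ through $n$, the key observation is that an $\ides$ occurrence of value $a<\pi_1$ in $\pi$ corresponds to a pair $(a+1,a)$ both lying below $\pi_1$, hence the matching stays within the same value-block and the recovered value in $p(\pi)$ again lies in $\{1,\ldots,\pi_1-1\}$.

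Concretely, the step I would carry out is to verify that whether $a$ lies below $\pi_1$ or not is an invariant of the correspondence in the proof of Lemma~\ref{lemma1}. The cleanest way is to note that the proof of Lemma~\ref{lemma1} does not merely show $\ides\,\pi=\ides\,p(\pi)$ as bare cardinalities; it exhibits an explicit matching of occurrences in which $a$ on the $\pi$ side is recovered as a specific entry on the $p(\pi)$ side, and tracking that entry through the complementation rule shows its value stays on the same side of $\pi_1$. Hence the matching sends $\{$occurrences of value in $\{1,\ldots,\pi_1-1\}$ for $\pi\}$ into the analogous set for $p(\pi)$, and by involutivity (using $p(p(\pi))=\pi$) it is a bijection between them.

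I expect the main obstacle to be bookkeeping the value-tracking precisely: one must confirm that the complementation relative to $\{1,\ldots,\pi_1-1\}$ preserves the property of being an $\ides$ occurrence with value in that interval, and in particular that adjacent values $a$ and $a+1$ cannot straddle $\pi_1$ (they cannot, since an $\ides$ occurrence $a<\pi_1-1$ forces $a+1\le\pi_1-1$, and the boundary case $a=\pi_1-1$ has already been handled separately). Once that case analysis is pinned down, the lemma follows immediately from the structure of the bijection already in hand, which is why the statement is advertised as following \emph{directly} from the proof of Lemma~\ref{lemma1}.
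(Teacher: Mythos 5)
Your overall strategy is exactly the paper's: the paper gives no separate argument for this lemma, saying only that it follows directly from the proof of Lemma~\ref{lemma1}, and the content of that remark is precisely your observation that the occurrence-matching constructed there respects the two value blocks $\{1,\ldots,\pi_1-1\}$ and $\{\pi_1+1,\ldots,n\}$. However, one claim in your write-up is false and should be deleted: the matching is \emph{not} value-preserving. For an $\ides$ occurrence $a<\pi_1-1$ of $\pi$, the entry of $\sigma=p(\pi)$ in position $n-\pi^{-1}(a+1)+2$ equals $\pi'_{\pi^{-1}(a+1)}$, the complement of $a+1$ inside $\{1,\ldots,\pi_1-1\}$, i.e.\ the value $\pi_1-1-a$; so the occurrence $a$ of $\pi$ is matched with the occurrence $\pi_1-1-a$ of $p(\pi)$, not with $a$ itself. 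A concrete counterexample: for $\pi=452631$ one has $p(\pi)=431526$, the $\ides$ occurrences of $\pi$ are $\{1,3\}$ while those of $p(\pi)$ are $\{2,3\}$, and the occurrence $1$ of $\pi$ is matched with the occurrence $2$ of $p(\pi)$.

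This misstatement is not fatal, because the step you say you would actually carry out is the weaker claim --- that lying below $\pi_1$ is an invariant of the matching --- and that claim is both true (each complementation maps $\{1,\ldots,\pi_1-1\}$ onto itself and $\{\pi_1+1,\ldots,n\}$ onto itself, and the position-reversal in passing from $\pi'$ to $\pi''$ does not change values) and sufficient. Your handling of the boundary is also right: an occurrence $a<\pi_1-1$ has both $a$ and $a+1$ below $\pi_1$, so no straddling occurs, and the remaining value $a=\pi_1-1$ is an $\ides$ occurrence of both $\pi$ and $p(\pi)$ whenever $\pi_1>1$, since both begin with the letter $\pi_1$. Together these give a bijection between the $\ides$ occurrences of $\pi$ in $\{1,\ldots,\pi_1-1\}$ and those of $p(\pi)$ in the same interval, which is the lemma. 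In short: strike the sentence asserting that the matched occurrence has ``the same value $a$,'' keep the block-invariance argument, and your proof is correct and coincides with the paper's intended one.
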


\section{Interval partitions}

In this section, we define the notions of interval partitions of sets, permutations and words. We also define the notion of a word expansion.

\subsection{Interval partition of a set}
\label{def_psi}

An {\it interval partition of a set} $\{1,2,\ldots,n\}=[n]$
is a partition of this set, where each part is an interval
(i.e., a set consisting of successive integers), and the {\it size} of an interval partition 
is the number of its parts. For example, $\{\{1,2,3\},\{4,5\},\{6,7\}\}$ is an interval 
partition of size 3 of the set $[7]$.

For two interval partitions $R$ and $P$ of $[n]$, we say that
$R$ is a {\it refinement} of $P$, denoted by $R\subseteq P$, if each part of $R$ is a 
weak subset of a part of $P$. In particular, $P$ is a refinement of itself.
For example,  $R=\{\{1,2\},\{3\},\{4,5\},\{6\},\{7\}\}$  is a refinement of  $P=\{\{1,2,3\},\{4,5\},\{6,7\}\}$.

Note that any refinement $R$ of size $k+j$ of an interval partition $P$ of size $k$ can be encoded 
by an increasing sequence of $j$ numbers. For the last example, $R$ can be encoded by (2,4) because 
when creating the refinement, we scanned $P$ from left to right and have broken parts in the second 
and forth possible places. For the same interval partition  $P=\{\{1,2,3\},\{4,5\},\{6,7\}\}$, the 
encoding (1,3,4) would give the refinement $R=\{\{1\},\{2,3\},\{4\},\{5\},\{6\},\{7\}\}$. In 
general, for a partition of size $k$ of $\{1,2,\ldots,n\}$, we have $n-k$ possibilities to break a part 
and one possibility not to break anything. Thus, breaking parts, which  gives refinements, can be encoded uniquely by a 
possibly empty subsequence of increasing integers in $\{1,2,\ldots,n-k\}$.

Let $I_n$ denote the set of all interval partitions of $[n]$, and 
for $P\in I_n$,  we let
$$
I_n|P=\{R\in I_n : R\subseteq P\}.
$$
Now, for two same size interval partitions
$P,S\in I_n$, we define a map
$$
\psi_{P,S}:I_n|P\rightarrow I_n|S,
$$
which sends a refinement $R$ in $I_n|P$ to the refinement $T$ in $I_n|S$ such that $R$ and $T$ have the same 
encodings. It is straightforward to see that $\psi_{P,S}$ is a bijection, 
and its inverse is $\psi_{S,P}$.

\begin{Exam}
\label{P_S}
If 
$R=\{\{1\},\{2,3\},\{4\},\{5,6\}\}$, 
$P=\{\{1\},\{2,3\},\{4,5,6\}\}$ and 
$S=\{\{1,2\},\{3\},\{4,5,6\}\}$, then 
$T=\psi_{P,S}(R)=\{\{1,2\},\{3\},\{4\},\{5,6\}\}$.
\end{Exam}

\subsection{Interval partition of permutations}

The {\it interval partition of a permutation} $\pi\in\s_n$, denoted $\ppart(\pi)$,
is the interval partition of $[n]$ defined by:
$a$ and $a+1$ belong to the same part of $\ppart(\pi)$ if and only if
$a$ occurs to the left of $a+1$ in $\pi$. Thus, the partition of a permutation is given by its 
maximal increasing subpermutations of consecutive elements.
For example, if $\sigma=14235$ and $\pi =45123$, then
$\ppart(\pi)=\ppart(\sigma)=\{\{1,2,3\},\{4,5\}\}$.

Since an $\ides$ in $\pi$ is a value $a$ such that $a+1$ occurs to the left of $a$ in 
$\pi$, it follows that the size of $\ppart(\pi)$ is equal to $\ides\,\pi+1$, and
the next corollary is a consequence of Lemma \ref{le_th}.
\begin{Co}
For any $\pi\in\s_n$, the interval partitions of $\pi$ and that of $p(\pi)$
have the same size.
\label{cor}
\end{Co}

\subsection{Interval partition of words}

The {\it interval partition of a word} $v\in [q]^n$, denoted by $\wpart(v)$, 
is the interval partition 
$$\{p_1,p_2,\ldots, p_q\}
$$
of $[n]$
where the cardinality of each part $p_i$ is equal to the number of occurrences of the symbol
$i\in [q]$ in $v$, and empty parts, if any, are omitted.
Formally,  $p_i$ is given by

$$
p_i=\{a+1, a+2, \ldots, b\},
$$
with
$$a=|v|_1+|v|_2+\cdots +|v|_{i-1}, \ {\rm and}\ 
b=a+|v|_i,
$$ 
and the number of occurrences of each letter
in $v$ determines $\wpart(v)$.
For example, if $v=12112$ and $w=33111$, then 
$\wpart(v)=\wpart(w)=\{\{1,2,3\},\{4,5\}\}$.
In particular, when $v$ is a permutation in $\s_n$, $\wpart(v)=\{\{1\},\{2\},\ldots,\{n\}\}$.
See also Figure \ref{fig_Exp} for other examples.

\subsection{Words expansion} 
For $v\in [q]^n$, the {\it expansion} of $v$, denoted $\Exp(v)$,
is the unique permutation $\pi\in\s_n$ with $\pi_i<\pi_j$
if and only if either $v_i<v_j$, or  $v_i=v_j$ and $i<j$.
In particular, if $v$ is a permutation, then $\Exp(v)=v$.
For example,  
$\Exp(12112)=14235$ and $\Exp(22111)=45123$. 
We refer to Figure \ref{fig_Exp} for some other examples.
The following fact is easy to check.
\begin{Fact}
If $v$ is a dense word in $[q]^n$ and 
$\pi=\Exp(v)$, then $\wpart(v)$ is a refinement of  $\ppart(\pi)$.
\end{Fact}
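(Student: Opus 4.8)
The plan is to verify the refinement part-by-part: I would show that each part $p_i$ of $\wpart(v)$ is contained in a single part of $\ppart(\pi)$, where $\pi=\Exp(v)$. Recall that $p_i=\{a+1,a+2,\ldots,b\}$ collects exactly the standardized ranks of the occurrences of the letter $i$ in $v$, with $a=|v|_1+\cdots+|v|_{i-1}$ and $b=a+|v|_i$.

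The key observation to establish first is that $\Exp$ assigns the values $a+1,a+2,\ldots,b$ to the occurrences of $i$ in the order in which they appear in $v$. Indeed, by the defining property of $\Exp(v)=\pi$, two positions carrying equal letters are ordered by their position, with the earlier one receiving the smaller value; hence if $i$ occurs at positions $j_1<j_2<\cdots<j_{|v|_i}$ in $v$, then $\pi_{j_1}=a+1<\pi_{j_2}=a+2<\cdots<\pi_{j_{|v|_i}}=b$.

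From this it would follow that for every $c$ with $a+1\le c<b$, the value $c$ occurs to the left of $c+1$ in $\pi$, since $c$ and $c+1$ are the ranks of two consecutive occurrences of $i$ and the earlier occurrence (carrying $c$) precedes the later one (carrying $c+1$). By the definition of $\ppart$, this places $c$ and $c+1$ in a common part of $\ppart(\pi)$. Chaining these relations for $c=a+1,\ldots,b-1$ shows that the whole set $p_i=\{a+1,\ldots,b\}$ lies in one part of $\ppart(\pi)$.

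Finally, since this holds for each letter $i\in[q]$, every part of $\wpart(v)$ is a weak subset of a part of $\ppart(\pi)$, which is precisely the statement $\wpart(v)\subseteq\ppart(\pi)$. The only mildly delicate point is the tie-breaking rule built into the definition of $\Exp$; once it is pinned down that equal letters are standardized in increasing order of position, the rest is a routine chaining argument, so I do not anticipate any genuine obstacle here.
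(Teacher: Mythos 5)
Your proof is correct. The paper itself offers no argument for this statement---it is introduced with ``The following fact is easy to check''---and your verification is exactly the intended one: the standardization $\Exp$ sends the occurrences of each letter $i$, read left to right, to the consecutive values $a+1,\ldots,b$, so each pair $c,c+1$ within $p_i$ has $c$ to the left of $c+1$ in $\pi$ and hence lies in a common part of $\ppart(\pi)$, and chaining gives the refinement. The tie-breaking rule you single out (equal letters standardized in increasing order of position) is indeed the only point where anything could go wrong, and it is built directly into the paper's definition of $\Exp$.
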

%
Actually, $\Exp$ is a function from $[q]^n$ to $\s_n$, which is 
surjective if $q\geq n$, but not injective (again, see Figure~\ref{fig_Exp}). However, one can see that the following fact holds.

\begin{Fact}
The dense word $v$ is uniquely determined from $\Exp(v)$ and $\wpart(v)$. 
\label{fact_2}
\end{Fact}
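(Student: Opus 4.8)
The plan is to exhibit an explicit inverse to the map $v\mapsto(\Exp(v),\wpart(v))$, which immediately yields the uniqueness asserted. Write $\pi=\Exp(v)$ and let $\wpart(v)=\{p_1,p_2,\ldots,p_q\}$ be the parts listed in increasing order, so that $p_k=\{a_k+1,\ldots,a_k+|v|_k\}$ with $a_k=|v|_1+\cdots+|v|_{k-1}$. Since $v$ is dense, every $|v|_k\geq 1$, so each $p_k$ is nonempty and the parts are naturally indexed by the letters $1,\ldots,q$ themselves. The proposed reconstruction rule is simply: for each position $i$, the letter $v_i$ equals the index $k$ of the (unique) part $p_k$ of $\wpart(v)$ containing $\pi_i$. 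This rule reads off $v$ using only $\pi$ and $\wpart(v)$, so proving it correct proves the Fact.

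First I would compute $\pi_i$ directly from the definition of $\Exp$. Because $\pi$ is a permutation of $[n]$, the quantity $\pi_i-1$ equals the number of positions $j$ with $\pi_j<\pi_i$, and by the defining property of the expansion these are exactly the $j$ with $v_j<v_i$ together with the $j<i$ having $v_j=v_i$. Setting $k=v_i$, the first group contributes $|v|_1+\cdots+|v|_{k-1}=a_k$, while the second group contributes some $c$ with $0\leq c\leq |v|_k-1$, since at most $|v|_k-1$ of the $|v|_k$ occurrences of the letter $k$ can precede position $i$.

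Combining these gives $\pi_i=a_k+c+1$, whence $a_k+1\leq\pi_i\leq a_k+|v|_k$, that is, $\pi_i\in p_k=p_{v_i}$. This is precisely the assertion that the reconstruction rule returns $v_i$, and it is the only real content of the argument: the rank ranges carved out by $\Exp$ coincide exactly with the interval parts recorded by $\wpart(v)$, so the part containing $\pi_i$ determines $v_i$ with no ambiguity. The point to get right is that the occurrences of a fixed letter occupy a contiguous block of ranks matching the correct part, which is where density is used, since it is what lets us identify the $k$-th part with the letter $k$. Once this alignment is established, applying the rule at every position $i$ recovers $v$ uniquely from $\Exp(v)$ and $\wpart(v)$.
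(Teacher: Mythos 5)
Your proof is correct, and it is essentially the argument the paper has in mind: the paper states this Fact without proof, and your reconstruction rule (read off $v_i$ as the index of the part of $\wpart(v)$ containing $\pi_i$) is precisely the map $\flat_P$ that the paper introduces immediately after the Fact. Your rank computation $\pi_i=a_k+c+1$, together with the observation that density is what lets the $k$-th part be identified with the letter $k$, supplies exactly the verification the paper leaves to the reader.
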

\noindent
If $P$ is a refinement of $\ppart(\pi)$, 
we denote by $\flat_P(\pi)$ the unique word $v$ with  $\Exp(v)=\pi$ and $\wpart(v)=P$,
and so $\Exp(\flat_P(\pi))=\pi$.
Also, we will use the following fact 
which follows from the definitions of $\maj$ and $\stat$
given in relations (\ref{gen_maj}) and (\ref{def_stat}).

\begin{Fact} For any word $v$, we have $\maj\, v=\maj\,(\Exp(v))$ and 
$\stat\, v=\stat\,(\Exp(v))$.
\label{fact_3}
\end{Fact}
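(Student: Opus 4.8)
The plan is to prove both identities directly from the vincular-pattern definitions of $\maj$ and $\stat$ in (\ref{gen_maj}) and (\ref{def_stat}), exploiting the one structural property of the expansion map that I will need: $\Exp$ preserves every strict order relation among the letters of $v$, while a tie $v_i=v_j$ with $i<j$ is resolved in $\pi=\Exp(v)$ as $\pi_i<\pi_j$. In particular, for consecutive positions this gives $v_i>v_{i+1}\iff\pi_i>\pi_{i+1}$, since a strict descent of $v$ is kept and an equality $v_i=v_{i+1}$ becomes an ascent of $\pi$.

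The identity for $\maj$ is then immediate: the major index of any word equals $\sum_{i:\,v_i>v_{i+1}}i$, so it depends only on the set of descent positions, and by the displayed equivalence $v$ and $\pi$ have exactly the same descents. For $\stat$ I would argue one triple at a time. Every length-$3$ term in (\ref{def_stat}) ranges over triples $(i,i+1,k)$ with $k>i+1$ (the underline forcing positions $i,i+1$ to be adjacent), and the final summand counts the descents $(i,i+1)$, which are preserved as above. Fixing such a triple, I would compare whether it is counted in $v$ with whether its image is counted in $\pi$, according to the position of $v_k$ relative to $v_i$ and $v_{i+1}$. When $(i,i+1)$ is a descent, the word patterns $213$, $212$, $321$ are matched exactly when $v_k\ge v_i$ or $v_k<v_{i+1}$, while the permutation patterns $213$, $321$ are matched exactly when $\pi_k>\pi_i$ or $\pi_k<\pi_{i+1}$; since $k>i$, tie-breaking turns $\pi_k>\pi_i$ into $v_k\ge v_i$ and $\pi_k<\pi_{i+1}$ into $v_k<v_{i+1}$, so the two conditions agree. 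When $(i,i+1)$ is an ascent, the word patterns $132$ and $121$ together require $v_i\le v_k<v_{i+1}$, which is precisely the condition for the single permutation pattern $132$ after tie-breaking. Finally, when $v_i=v_{i+1}$ no word pattern is matched; the image $(i,i+1)$ is an ascent of $\pi$ for which the interval $v_i\le v_k<v_{i+1}$ is empty, so no permutation pattern is matched either. Summing over all triples and adding the equal descent counts yields $\stat\,v=\stat\,\pi$.

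The hard part --- and the only step where the particular choice of extension in (\ref{def_stat}) is used --- is the equality cases. The extensions were chosen exactly so that the word pattern $212$, an equality $v_k=v_i$ to the right of a descent, is sent by tie-breaking ($k>i$, hence $\pi_k>\pi_i$) to an occurrence of the permutation pattern $213$, and the word pattern $121$, an equality $v_k=v_i$ to the right of an ascent, is sent to an occurrence of $132$. I expect the delicate point to be checking that the equality configurations that are \emph{not} counted in $v$, such as $v_k=v_{i+1}$, likewise land on non-counted permutation patterns ($312$ in the descent case, $123$ in the ascent case); this matching is what fails for other extensions and is the reason, noted after (\ref{def_stat}), that only this extension reproduces $\maj$.
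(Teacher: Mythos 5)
Your proof is correct: the descent/ascent/equality case analysis at adjacent positions $(i,i+1)$, combined with the left-to-right tie-breaking rule of $\Exp$ (so that $\pi_k>\pi_i \Leftrightarrow v_k\ge v_i$ for $k>i$), verifies both identities exactly, including the equality configurations ($v_k=v_i$ landing on counted patterns, $v_k=v_{i+1}$ on uncounted ones) that make this particular extension of $\stat$ work. This matches the paper's approach in the only sense available: the paper states this Fact with no proof at all, asserting it ``follows from the definitions'' in (\ref{gen_maj}) and (\ref{def_stat}), and your argument is precisely the routine verification the authors leave to the reader.
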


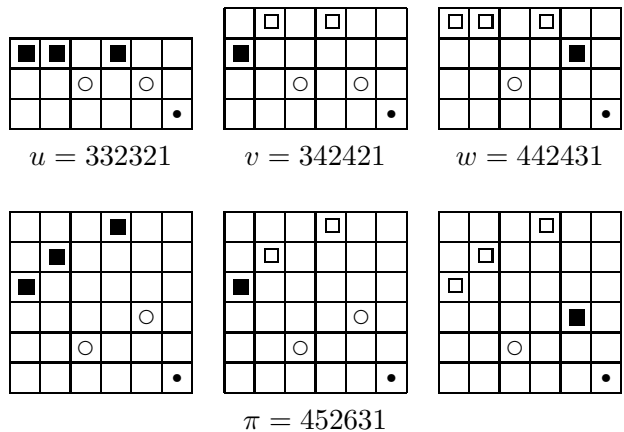
\begin{figure}
\begin{center}  
\begin{tabular}{ccc}
   \unitlength=4mm
\begin{picture}(6,3)
\put(0.,0.){\line(1,0){6}}
\put(0.,1.){\line(1,0){6}}
\put(0.,2.){\line(1,0){6}}
\put(0.,3.){\line(1,0){6}}
\put(0.,0.){\line(0,1){3}}
\put(1.,0.){\line(0,1){3}}
\put(2.,0.){\line(0,1){3}}
\put(3.,0.){\line(0,1){3}}
\put(4.,0.){\line(0,1){3}}
\put(5.,0.){\line(0,1){3}}
\put(6.,0.){\line(0,1){3}}
\put(0.3,2.25){\rule{0.2cm}{0.2cm}}
\put(1.3,2.25){\rule{0.2cm}{0.2cm}}
\put(3.3,2.25){\rule{0.2cm}{0.2cm}}
\put(2.5,1.5){\circle{0.4}}
\put(4.5,1.5){\circle{0.4}}
\put(5.5,0.5){\circle*{0.3}} 
\end{picture} & 
   \unitlength=4mm
\begin{picture}(6,3)
\put(0.,0.){\line(1,0){6}}
\put(0.,1.){\line(1,0){6}}
\put(0.,2.){\line(1,0){6}}
\put(0.,3.){\line(1,0){6}}
\put(0.,4.){\line(1,0){6}}
\put(0.,0.){\line(0,1){4}}
\put(1.,0.){\line(0,1){4}}
\put(2.,0.){\line(0,1){4}}
\put(3.,0.){\line(0,1){4}}
\put(4.,0.){\line(0,1){4}}
\put(5.,0.){\line(0,1){4}}
\put(6.,0.){\line(0,1){4}}
\put(1.35,3.35){\framebox(0.4,0.4){}}
\put(3.35,3.35){\framebox(0.4,0.4){}}
\put(0.3,2.25){\rule{0.2cm}{0.2cm}}
\put(2.5,1.5){\circle{0.4}}
\put(4.5,1.5){\circle{0.4}}
\put(5.5,0.5){\circle*{0.3}}
\end{picture}

 & 
   \unitlength=4mm
\begin{picture}(6,3)
\put(0.,0.){\line(1,0){6}}
\put(0.,1.){\line(1,0){6}}
\put(0.,2.){\line(1,0){6}}
\put(0.,3.){\line(1,0){6}}
\put(0.,4.){\line(1,0){6}}
\put(0.,0.){\line(0,1){4}}
\put(1.,0.){\line(0,1){4}}
\put(2.,0.){\line(0,1){4}}
\put(3.,0.){\line(0,1){4}}
\put(4.,0.){\line(0,1){4}}
\put(5.,0.){\line(0,1){4}}
\put(6.,0.){\line(0,1){4}}
\put(0.35,3.35){\framebox(0.4,0.4){}}
\put(1.35,3.35){\framebox(0.4,0.4){}}
\put(3.35,3.35){\framebox(0.4,0.4){}}
\put(4.3,2.25){\rule{0.2cm}{0.2cm}}
\put(2.5,1.5){\circle{0.4}}
\put(5.5,0.5){\circle*{0.3}}
\end{picture}
 \\
   $u=332321$ & $v=342421$ & $w=442431$ \\
   \\
\unitlength=4mm
\begin{picture}(6,6)
\put(0.,0.){\line(1,0){6}}
\put(0.,1.){\line(1,0){6}}
\put(0.,2.){\line(1,0){6}}
\put(0.,3.){\line(1,0){6}}
\put(0.,4.){\line(1,0){6}}
\put(0.,5.){\line(1,0){6}}
\put(0.,6.){\line(1,0){6}}
\put(0.,0.){\line(0,1){6}}
\put(1.,0.){\line(0,1){6}}
\put(2.,0.){\line(0,1){6}}
\put(3.,0.){\line(0,1){6}}
\put(4.,0.){\line(0,1){6}}
\put(5.,0.){\line(0,1){6}}
\put(6.,0.){\line(0,1){6}}
\put(3.3,5.25){\rule{0.2cm}{0.2cm}}
\put(1.3,4.25){\rule{0.2cm}{0.2cm}}
\put(0.3,3.25){\rule{0.2cm}{0.2cm}}
\put(4.5,2.5){\circle{0.4}}
\put(2.5,1.5){\circle{0.4}}
\put(5.5,0.5){\circle*{0.3}} 
\end{picture}
& 
\unitlength=4mm
\begin{picture}(6,6)
\put(0.,0.){\line(1,0){6}}
\put(0.,1.){\line(1,0){6}}
\put(0.,2.){\line(1,0){6}}
\put(0.,3.){\line(1,0){6}}
\put(0.,4.){\line(1,0){6}}
\put(0.,5.){\line(1,0){6}}
\put(0.,6.){\line(1,0){6}}
\put(0.,0.){\line(0,1){6}}
\put(1.,0.){\line(0,1){6}}
\put(2.,0.){\line(0,1){6}}
\put(3.,0.){\line(0,1){6}}
\put(4.,0.){\line(0,1){6}}
\put(5.,0.){\line(0,1){6}}
\put(6.,0.){\line(0,1){6}}
\put(1.35,4.35){\framebox(0.4,0.4){}}
\put(3.35,5.35){\framebox(0.4,0.4){}}
\put(0.3,3.25){\rule{0.2cm}{0.2cm}}
\put(4.5,2.5){\circle{0.4}}
\put(2.5,1.5){\circle{0.4}}
\put(5.5,0.5){\circle*{0.3}} 
\end{picture}
& 
\unitlength=4mm
\begin{picture}(6,6)
\put(0.,0.){\line(1,0){6}}
\put(0.,1.){\line(1,0){6}}
\put(0.,2.){\line(1,0){6}}
\put(0.,3.){\line(1,0){6}}
\put(0.,4.){\line(1,0){6}}
\put(0.,5.){\line(1,0){6}}
\put(0.,6.){\line(1,0){6}}
\put(0.,0.){\line(0,1){6}}
\put(1.,0.){\line(0,1){6}}
\put(2.,0.){\line(0,1){6}}
\put(3.,0.){\line(0,1){6}}
\put(4.,0.){\line(0,1){6}}
\put(5.,0.){\line(0,1){6}}
\put(6.,0.){\line(0,1){6}}
\put(1.35,4.35){\framebox(0.4,0.4){}}
\put(3.35,5.35){\framebox(0.4,0.4){}}
\put(0.35,3.35){\framebox(0.4,0.4){}}
\put(4.3,2.25){\rule{0.2cm}{0.2cm}}
\put(2.5,1.5){\circle{0.4}}

\put(5.5,0.5){\circle*{0.3}} 
\end{picture}\\

 \multicolumn{3}{c}{$\pi=452631$}
\end{tabular}
\end{center}
\caption{\label{fig_Exp}
The permutation 
$\pi=452631\in\s_6$ in Figure \ref{pis}(a) is the expansion of each of $u$, $v$ and $w$.
We have that $\wpart(u)=\{\{1\},\{2,3\},\{4,5,6\}\}$, 
$\wpart(v)=\{\{1\},\{2,3\},\{4\},\{5,6\}\}$ and
$\wpart(w)=\{\{1\},\{2\},\{3\},\{4,5,6\}\}$; also, 
$\ppart(\pi)=\{\{1\},\{2,3\},\{4,5,6\}\}$.
}
\end{figure}

\section{Extension of $p$ to words}\label{ext-of-p-to-words}

In this section, we show that 
the statistic $\stat$ on words defined by us in Section~\ref{sec2} is equidistributed with the statistic 
$\maj$ on words, and thus our $\stat$ is Mahonian. In fact, we show a more general result on 
joint equidistribution of six statistics on words: see Theorem~\ref{thm1} for the case of dense 
words, and Theorem~\ref{thm2} for the case of arbitrary words. 

To this end, we extend the bijection $p$ from permutations to words, which is roughly 
done by the following three steps: For a word $v$, we apply the expansion 
operation ($\Exp$ defined above) in order to obtain a permutation $\pi$,
then apply the bijection $p$ on permutations to obtain $\sigma=p(\pi)$, and, finally, 
apply the inverse of the expansion operation to $\sigma$. 
The resulting word $w$ is the image of
$v$ by the extension of $p$ to words.
The main difficulty consists in the third step, since with no additional constraints, 
the expansion operation is not invertible. The main ingredient to overcome this,
is the bijection $\psi$ defined in Section \ref{def_psi}, which works due to
a consequence of Lemma \ref{le_th} expressed in 
Corollary \ref{cor}.

\subsection{Dense words}

Here we will extend the bijection $p:\s_n\to\s_n$ to length $n$ dense words over $[q]$.
For a dense word $v$ we construct a dense word  $w$, 
and show that the transformation $v\ {\mapsto}\ w$ is a bijection which 
preserving certain properties of $p$.

Let $v$ be a dense word in $[q]^n$, $R=\wpart(v)$, and $\pi$ and $\sigma$
be the permutations defined by:
\begin{itemize}
\item $\pi=\Exp(v)$ with $P=\ppart(\pi)$, and 
\item $\sigma=p(\pi)$ with $S=\ppart(\sigma)$.
\end{itemize}

Now let $w=\flat_T(\sigma)$ with $T=\psi_{P,S}(R)$. 

Clearly, when $v$ is a permutation, then $w=\sigma=p(\pi)$, and so
the restriction of the mapping $v\mapsto w$ to permutations is 
equal to $p$, and by a slight abuse of notation we denote
this mapping by $p$.

\begin{Exam} Let $v=342421$ as in Figure \ref{fig_Exp},
with $R=\wpart(v)=\{\{1\},\{2,3\},\{4\},\{5,6\}\}$. Then
\begin{itemize}
\item $\pi=\Exp(v)=452631$ (see also Figure \ref{pis}(a))
      and $P=\ppart(\pi)=\{\{1\},\{2,3\},\{4,5,6\}\}$;
\item $\sigma=p(\pi)=431526$ is the permutation in 
      Figure \ref{pis}(c), and 
      $S=\ppart(\sigma)=\{\{1,2\},\{3\},\{4,5,6\}\}$.
\end{itemize}
With the previous values, $T=\psi_{P,S}(R)=\{\{1,2\},\{3\},\{4\},\{5,6\}\}$
(see Example \ref{P_S}) and $p(v)=w=\flat_T(\sigma)=321414$.
It is routine to check that  $\stat\, v=\maj\, w=7$, and $\stat\, w=\maj\, v=11$. 
Notice that, $w$ is not a rearrangement of $v$.
\end{Exam}

The following theorem is a words counterpart of Theorem 2.1 in \cite{Bur} 
endowed with $\ides$ statistic. In that theorem, $\adj$ is extended to dense words as follows:
for a word $v$, $\adj\,v$ is the number of positions $i$, $1\leq i\leq n$,
in the word $v'=v0$ such that $v'_i=v'_{i+1}+1$, and  $i$ is 
the leftmost position where the letter $v'_i$ occurs in $v'$, while
$i+1$ is the rightmost position where the letter $v'_{i+1}$ occurs in $v'$.

\begin{The}\label{thm1}
The function $p$ is a bijection from length $n$ dense words over $[q]$ 
into itself, and the $6$-tuple $(\adj,\des,\ides,\Fi,\maj,\stat)\,v$ is equal to 
$(\adj,\des,\ides,\Fi,\stat,\maj)\,p(v)$, for any dense word $v\in [q]^n$. 
\label{the_eq_w_d}
\end{The}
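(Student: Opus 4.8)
The plan is to verify the statement in three stages: well-definedness, the involution (hence bijection) property, and the behaviour of the six statistics. First I would check that $w=\flat_T(\sigma)$ is a well-defined dense word. Since $\pi=\Exp(v)$, the Fact that $\wpart(v)$ refines $\ppart(\pi)$ gives $R\in I_n|P$, while Corollary~\ref{cor} guarantees that $P=\ppart(\pi)$ and $S=\ppart(\sigma)$ have the same size, so $\psi_{P,S}$ applies and produces a refinement $T\in I_n|S$ with the same encoding as $R$; in particular $T$ has the same number of parts as $R$, namely $q$, so $w=\flat_T(\sigma)$ uses every letter of $[q]$ and is dense. To prove that $p$ is a bijection I would show it is an involution: applying $p$ to $w$ the relevant data are $R'=\wpart(w)=T$, $\pi'=\Exp(w)=\sigma$, hence $\sigma'=p(\sigma)=\pi$ (as $p$ is an involution on $\s_n$), $P'=S$, $S'=P$, and $T'=\psi_{S,P}(T)=R$ because $\psi_{S,P}$ is the inverse of $\psi_{P,S}$; then $\flat_R(\pi)=v$ by the uniqueness in Fact~\ref{fact_2}, so $p(p(v))=v$.

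For the statistics I would first establish the ``expansion-invariance'' identities $\des\,v=\des\,\pi$, $\ides\,v=\ides\,\pi$ and $\adj\,v=\adj\,\pi$, the analogues of Fact~\ref{fact_3} for these statistics, which reduce everything on the word level to the permutation level. The descent identity is immediate from the definition of $\Exp$ (equal letters give an ascent in $\pi$, strict comparisons are preserved). For $\ides$ one uses that, for dense $v$, $next_v(a)=a+1$, that each letter-block of $v$ becomes a block of consecutive increasing values in $\pi$, and hence that $a$ is an $\ides$-value of $\pi$ only at block boundaries, where it corresponds exactly to a letter $c$ with some $c+1$ preceding some $c$ in $v$; this yields $\ides\,v=\ides\,\pi$. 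For $\adj$ one checks that $\pi_i=\pi_{i+1}+1$ forces $\pi_i$ to be the minimum of its block and $\pi_{i+1}$ the maximum of the preceding block, which is precisely the word-adjacency condition (leftmost occurrence of $v_i$, rightmost occurrence of $v_{i+1}=v_i-1$), including the terminal $0$; hence $\adj\,v=\adj\,\pi$. The same identities hold for $w$ and $\sigma$.

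Combining these with Burstein's permutation result $(\adj,\des,\Fi,\maj,\stat)\,\pi=(\adj,\des,\Fi,\stat,\maj)\,\sigma$ from \cite{Bur}, with Lemma~\ref{lemma1} ($\ides\,\pi=\ides\,\sigma$), and with Fact~\ref{fact_3} ($\maj\,v=\maj\,\pi$, $\stat\,v=\stat\,\pi$, and likewise for $w,\sigma$) immediately gives preservation of $\adj$, $\des$, $\ides$ and the swap of $\maj$ and $\stat$. The hard part will be $\Fi$, since unlike the others it is not expansion-invariant: $\Fi\,v=v_1$ equals $1$ plus the number of parts of $R$ contained in $\{1,\ldots,\pi_1-1\}$, where $\pi_1=\min(p_{v_1})$, whereas $\Fi\,\pi=\pi_1$.

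For $\Fi$ I would argue as follows. Let $k_0$ be the number of parts of $P$ inside $\{1,\ldots,\pi_1-1\}$; since $\pi_1-1$ is an $\ides$-value of $\pi$ (as in the proof of Lemma~\ref{lemma1}), these first $k_0$ parts cover $\{1,\ldots,\pi_1-1\}$ exactly, and $k_0$ equals the number of $\ides$ of $\pi$ in that interval. By Lemma~\ref{le_by}, $\sigma_1=\pi_1$ (from $\Fi\,\pi=\Fi\,\sigma$), and the fact that $\sigma_1-1$ is an $\ides$-value of $\sigma$, the first $k_0$ parts of $S$ likewise cover $\{1,\ldots,\pi_1-1\}$. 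Thus the first $k_0$ parts of $P$ and of $S$ span the same initial segment of length $\pi_1-1$, so they contain the same number $g=(\pi_1-1)-k_0$ of internal (breakable) gaps, namely the gaps with indices $1,\ldots,g$ in each global left-to-right numbering. Since $\psi_{P,S}$ preserves the encoding, $R$ and $T$ break the same subset of these first $g$ gaps; as no part of $P$ or $S$ straddles $\pi_1-1$, it follows that the number of parts of $R$ in $\{1,\ldots,\pi_1-1\}$ equals that of $T$, whence $v_1=w_1$, i.e. $\Fi\,v=\Fi\,w$ (the case $\pi_1=1$ being trivial as then $v_1=w_1=1$). Assembling the four preserved identities with the $\maj$–$\stat$ swap completes the proof.
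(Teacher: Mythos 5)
Your proposal is correct and follows essentially the same route as the paper's proof: bijectivity via the involution property of $p$ on $\s_n$ together with $\psi_{S,P}=\psi_{P,S}^{-1}$, reduction of $\des$, $\ides$, $\adj$, $\maj$, $\stat$ to the permutation level via expansion-invariance combined with Burstein's result and Lemma~\ref{lemma1}, and the $\Fi$ argument via Lemma~\ref{le_by} plus the fact that $\wpart(v)$ and $\wpart(p(v))$ have the same encoding as refinements of $\ppart(\pi)$ and $\ppart(\sigma)$. Your write-up is in fact more detailed than the paper's (explicit well-definedness/denseness of $w$, the block-boundary analysis for $\ides$ and $\adj$, and the gap-counting for $\Fi$), but the decomposition and key lemmas are identical.
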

\begin{proof}
First, since $p:\s_n\to \s_n$ is an involution, and the inverse of
$\psi_{P,S}:I_n|P\rightarrow I_n|S$ is $\psi_{S,P}:I_n|S\rightarrow I_n|P$,
it follows that $p(p(v))=v$ for any dense words in $[q]^n$, and so $p$
is an involution (and thus a bijection).

Let now $v$ be a dense word in $[q]^n$, $w=p(v)$,
$\pi=\Exp(v)$ and $\sigma=p(\pi)$, as in the definition of the transformation $p$ on
words. It follows that $\sigma=\Exp(w)$, and by Fact~\ref{fact_3},
that $\stat\,v=\stat\,\pi=\maj\,\sigma=\maj\,w$. 
Also, since $p$ is an involution, we have $\maj\,v=\stat\,w$.

In the word $v$, $v_i$ is a descent if and only if $\pi_i$ is a descent in 
$\pi$, and analogously for $w$
and $\sigma$.
Since $p$ preserves the number of descents on permutations, so it does on 
words.

Similarly, in the word $v$, $v_i$ is an occurrence of an $\ides$ if and only if 
$\pi_i$ is an occurrence of an $\ides$ in $\pi$, and analogously for $w$
and $\sigma$.
By Lemma \ref{le_th},
$p$ preserves the number of $\ides$'s on permutations, and thus so does on words.

The proof is similar for $\adj\,v=\adj\,p(v)$.
 
By the definition of $\ppart$ and the construction of 
$p$, $\ppart\, \pi$ and $\ppart\, \sigma$ are both
refinements of $\{\{1,2,\ldots,\pi_1-1\},\{\pi_1,\ldots,n\}\}$. In addition,
by Lemma \ref{le_by}, the number of `sub-parts' of 
$\{1,2,\ldots,\pi_1-1\}$ of $\ppart\, \pi$ and of $\ppart\, \sigma$
are the same.
Since $\wpart\, v$ is a refinement of $\ppart\, \pi$
having the same encoding as the refinement  $\wpart\, w$ of $\ppart\, \sigma$,
it follows that the number of `sub-parts' of 
$\{1,2,\ldots,\pi_1-1\}$ of $\wpart\, v$ and of $\wpart\, w$
are the same, and so $v_1=w_1$, that is $\Fi\, v=\Fi\, w$. \end{proof}

\subsection{General words}

For a word $v=v_1v_2\ldots v_n\in [q]^n$, we let $\red(v)$ denote the word obtained from 
$v$ in which the $i$th smallest letter in $v$ is substituted by $i$. 
For example, $\red(162414)=142313$.

Clearly, the function $\red$ produces a dense word and it is a bijection between the set of words over the alphabet from which $v$ is constructed and the set of dense words of the same length as that of $v$. Thus, to find the pre-image of  $\red(v)$, we need to know the alphabet from which $v$ is constructed. 

Since $\red$ and $p$ are bijections and $\red$ 
preserves the order on $[q]$, we have the following generalization 
of Theorem \ref{the_eq_w_d}, where by a slight abuse of notation,
we denote by $p$ the function $\red^{-1}\circ p\circ \red$, where $\red^{-1}$ uses the 
alphabet of the input word. Also, in the following theorem, we slightly abuse 
notation to denote by $\adj$ the composition $\adj\circ\red$. That is, to calculate the 
value of statistic $\adj$ on a given word $v\in  [q]^n$, one should first turn $v$ into 
the dense word $\red(v)$, and then calculate the value of $\adj$ using the definition state right before
Theorem~\ref{the_eq_w_d}.

\begin{The}\label{thm2}
The function $p$ is a bijection from $[q]^n$ into itself, and the $6$-tuple 
$$(\adj,\des,\ides,\Fi,\maj,\stat)\,v$$ is the same as  
$(\adj,\des,\ides,\Fi,\stat,\maj)\,p(v)$, for any word $v\in [q]^n$. 
\label{the_eq_w}
\end{The}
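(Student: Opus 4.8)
The plan is to derive Theorem~\ref{thm2} from the already-established Theorem~\ref{thm1} by a reduction argument based on the map $\red$. The key observation is that $\red$ is an order-preserving bijection from $[q]^n$ onto the set of dense words of length $n$ (over the alphabet $[\,|\{v_1,\dots,v_n\}|\,]$), and the composite $p=\red^{-1}\circ p\circ\red$ is, by construction, a composition of three bijections, hence itself a bijection from $[q]^n$ into itself; this disposes of the bijectivity claim immediately. It remains to check that each of the six statistics behaves correctly under this composite, and the heart of the matter is that $\red$ commutes appropriately with all six statistics.

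First I would record the precise invariance of each statistic under $\red$. For $\des$, $\maj$, and $\stat$, the point is that these statistics depend only on the relative order of the letters (for $\des$ and $\maj$ this is immediate from their definitions via strict inequalities and the positions of descents; for $\stat$ it follows from its vincular-pattern definition in relation~(\ref{def_stat}), since occurrences of patterns are determined by order-isomorphism). Because $\red$ preserves the order on $[q]$, we have $\des\,v=\des\,\red(v)$, $\maj\,v=\maj\,\red(v)$, and $\stat\,v=\stat\,\red(v)$. For $\ides$, the statistic counts values $a$ for which some occurrence of $next_v(a)$ precedes an occurrence of $a$; since $\red$ relabels letters monotonically and preserves the successor relation among the letters actually occurring in $v$, one checks that $\ides\,v=\ides\,\red(v)$. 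The statistic $\Fi$ records the first letter, and since $\red$ is monotone, $\red(v)$ has first letter equal to the rank of $v_1$, so $\Fi$ is compatible with $\red$ in the sense that $\Fi\,v$ and $\Fi\,\red(v)$ determine each other via the alphabet. Finally, $\adj$ on general words is \emph{defined} as $\adj\circ\red$, so its invariance is a tautology built into the notation.

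Having these compatibilities, the computation is then routine bookkeeping. Writing $v'=\red(v)$, Theorem~\ref{thm1} applied to the dense word $v'$ gives that the six-tuple $(\adj,\des,\ides,\Fi,\maj,\stat)\,v'$ equals $(\adj,\des,\ides,\Fi,\stat,\maj)\,p(v')$. Now $p(v)=\red^{-1}(p(v'))$, where $\red^{-1}$ restores the original alphabet; since $\red^{-1}$ is also order-preserving, each of the five order-dependent statistics takes the same value on $p(v)$ as on $p(v')=\red(p(v))$, and $\adj$ matches by its definition as $\adj\circ\red$. Combining these equalities with Theorem~\ref{thm1} yields the desired equality of six-tuples for the general word $v$, with the last two coordinates swapped exactly as claimed.

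The main obstacle I anticipate is not conceptual but one of care: one must verify that $\ides$ and $\Fi$ genuinely descend through $\red$, since these are the two statistics that refer to the \emph{identity} of letters rather than merely to descents. For $\ides$ the subtlety is that $next_v(a)$ refers to the smallest letter of $v$ exceeding $a$, and after applying $\red$ this successor among occurring letters is sent to the numerical successor; one should confirm that no phantom occurrences of $\ides$ are created or destroyed when gaps in the alphabet are collapsed. Once this is checked—and it follows cleanly from $\red$ being a monotone bijection onto a dense alphabet, so that successive occurring letters become consecutive integers—the rest of the argument is a direct transport of Theorem~\ref{thm1} along $\red$, and the theorem follows.
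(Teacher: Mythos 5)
Your proposal is correct and takes essentially the same approach as the paper: the paper also obtains Theorem~\ref{thm2} as an immediate consequence of Theorem~\ref{thm1}, by conjugating with the order-preserving bijection $\red$ (setting $p=\red^{-1}\circ p\circ\red$ with $\red^{-1}$ using the input word's alphabet, and reading $\adj$ as $\adj\circ\red$), and it leaves the statistic-by-statistic compatibility checks implicit. Your explicit verification that $\des$, $\maj$, $\stat$, $\ides$ and $\Fi$ descend through $\red$ simply fills in details the paper omits.
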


\section{Final remarks}

It is worth mentioning that our bijection $p$ does not preserve the number of occurrences 
of letters, while our computer experiments made us believe that 
such a bijection exists, and we invite the reader to find it. Also, it would be of interest to explore the property of being a Mahonian statistic on words for other Mahonian statistics on permutations defined in~\cite{BabSteim}.

Finally, a {\tt C} implementations of the bijection $p$ 
is on the web site of the second author \cite{Vaj_web}.

\section*{Acknowledgments}

The authors are grateful to the Edinburgh Mathematical Society for supporting the 
second author's visit of the University of Strathclyde, 
which helped this paper to appear.

\end{document}